\newtheorem{theorem}{Theorem}
\newtheorem{proposition}{Proposition}
\newtheorem{corollary}{Corollary}
\newtheorem{definition}{Definition}
\def\qed{\ifvmode\Realemovelastskip\fi
{\unskip\nobreak\hfil\penalty50\hbox{}\nobreak\hfil \hbox{\vrule
height1.2ex width1.2ex}\parfillskip=0pt \finalhyphendemerits=0
\par\smallskip}}
\def\qedr{\ifvmode\Realemovelastskip\fi
{\unskip\nobreak\hfil\penalty50\hbox{}\nobreak\hfil \hbox{
$\diamond$}\parfillskip=0pt \finalhyphendemerits=0
\par\smallskip}}
\def\ds{\displaystyle}
\newenvironment{proof}{\noindent{\sl Proof:~~~}}{\quad \qed}
\def\beq{\begin{equation}}
\def\eeq{\end{equation}}
\def\bea{\begin{eqnarray}}
\def\eea{\end{eqnarray}}
\def\beann{\begin{eqnarray*}}
\def\eeann{\end{eqnarray*}}
\def\beasn{\begin{sneqnarray}}
\def\eeasn{\end{sneqnarray}}
\def\ben{\begin{enumerate}}
\def\een{\end{enumerate}}
\def\bit{\begin{itemize}}
\def\eit{\end{itemize}}
\def\derpar#1#2{\displaystyle\frac{\partial{#1}}{\partial{#2}}}
\def\derpars#1#2#3{\displaystyle\frac{\partial^2{#1}}{\partial{#2}\partial{#3}}}
\def\W{{\cal W}}
\def\C{{\cal C}}
\def\vf{{\mathfrak{X}}}
\def\df{{\mit\Omega}}
\def\Lag{{\cal L}}
\def\Leg{{\cal FL}}
\def\d{{\rm d}}
\def\p{{\rm p}}
\def\Zahl{\mathbb{Z}}
\def\Real{\mathbb{R}}
\def\R{\mathbb{R}}
\def\pr{\operatorname{pr}}
\def\Tan{{\rm T}}
\def\inn{\mathop{i}\nolimits}
\def\Cinfty{{\rm C}^\infty}
\def\tabaddress#1{{\small\it\begin{tabular}[t]{c}#1
\\[1.2ex]\end{tabular}}}
\def\qed{\ifvmode\removelastskip\fi
{\unskip\nobreak\hfil\penalty50\hbox{}\nobreak\hfil \hbox{\vrule
height1.2ex width1.2ex}\parfillskip=0pt \finalhyphendemerits=0
\par\smallskip}}
\title{UNIFIED FORMALISM FOR THE GENERALIZED $k$th-ORDER HAMILTON-JACOBI PROBLEM}
\author{ 
{\sc  Leonardo Colombo\thanks{{\bf e}-{\it mail}: leo.colombo@icmat.es} }\\
{\sc  Manuel de Le\'on\thanks{{\bf e}-{\it mail}: mdeleon@icmat.es} }\\
\vspace{5mm}
   \tabaddress{Instituto de Ciencias Matem\'{a}ticas (CSIC-UAM-UC3M-UCM). \\
   C/ Nicol\'{a}s Cabrera 15. Campus Cantoblanco UAM. 28049 Madrid. Spain} \\
{\sc  Pedro Daniel Prieto-Mart\'\i nez\thanks{{\bf e}-{\it mail}: peredaniel@ma4.upc.edu} }\\
{\sc Narciso Rom\'an-Roy\thanks{{\bf e}-{\it mail}: nrr@ma4.upc.edu}}  \\
   \tabaddress{Departamento de Matem\'atica Aplicada IV.\\
Universitat Polit\`ecnica de Catalunya-Barcelona Tech.\\
   Edificio C-3, Campus Norte UPC.
   C/ Jordi Girona 1. 08034 Barcelona. Spain}
}
\begin{document}

\maketitle

\thispagestyle{empty}

\begin{abstract}
The geometric formulation of the Hamilton-Jacobi theory enables us
to generalize it to systems of higher-order ordinary differential
equations. In this work we introduce the unified
Lagrangian-Hamiltonian formalism for the geometric Hamilton-Jacobi
theory on higher-order autonomous dynamical systems described by
regular Lagrangian functions.
\end{abstract}

\medskip
\noindent
{\sl Key words}:
Hamilton-Jacobi equation, Higher-order systems, Skinner-Rusk formalism.

\smallskip
\noindent
\vbox{\raggedleft AMS s.\,c.\,(2010): 53C80, 70H20, 70H50.}\null

\markright{\rm  L. Colombo, M. de Le\'on, P.D. Prieto-Mart\'{\i}nez, N. Rom\'{a}n-Roy:
 \sl Generalized $k$th-order HJ}

\clearpage

\tableofcontents


\section{Introduction}

The geometric formulation of the Hamilton-Jacobi theory given in
\cite{GHJT} and \cite{dLMdDVa} enables us to generalize it to
systems of higher-order ordinary differential equations.
This generalization has been done recently for the Lagrangian and Hamiltonian formalism
of higher-order autonomous mechanical systems described by
regular Lagrangian functions \cite{CoMdLPrRo}.
The aim of this work is to give a unified Lagrangian-Hamiltonian version
of this theory for these kinds of systems, using the unified framework introduced
by Skinner and Rusk \cite{SKRK}.
The advantage of this formulation is that it compresses the Lagrangian and Hamiltonian
Hamilton--Jacobi problems into a single formalism which allows to recover both of them
in a simple way, and it is specially interesting when dealing with singular systems.

All the manifolds are real, second countable and $\Cinfty$. 
The maps and the structures are assumed to be $\Cinfty$. Sum over repeated indices is understood.

\section{Higher-order tangent bundles}

Let $Q$ be a $n$-dimensional manifold, and $k \in \Zahl^+$. The {\sl
$k$th-order tangent bundle of $Q$} is the $(k+1)n$-dimensional
manifold $\Tan^{k}Q$ made of the $k$-jets of the bundle $\pi \colon
\R \times Q \to \R$ with fixed source point $t = 0 \in \R$; that is,
$\Tan^kQ = J_0^k\pi$.

We have the following natural projections (for $r \leqslant k$):
$$
\begin{array}{rcl}
\rho^k_r \colon \Tan^kQ & \longrightarrow & \Tan^rQ \\
j^k_0\phi & \longmapsto & j^r_0\phi
\end{array} \quad ; \quad
\begin{array}{rcl}
\beta^k \colon \Tan^kQ & \longrightarrow & Q \\
j^k_0\phi & \longmapsto & \phi(0)
\end{array}
$$
where $j^k_0\phi$ denotes a point in $\Tan^kQ$; that is, the
equivalence class of a curve $\phi \colon I \subset \R \to Q$ by the
$k$-jet equivalence relation. Notice that $\rho^k_0 = \beta^k$,
where $\Tan^0Q$ is canonically identified with $Q$, and $\rho^k_k =
{\rm Id}_{\Tan^kQ}$. Observe also that $\rho^l_s \circ \rho^r_l =
\rho^r_s$, for $0 \leqslant s \leqslant l \leqslant r \leqslant k$.

If $\phi \colon \R \to Q$ is a curve in $Q$, the {\rm canonical
lifting} of $\phi$ to $\Tan^kQ$ is the curve $j^k\phi\colon
\Real\to\Tan^kQ$ defined as the $k$-jet lifting of $\phi$ restricted
to $\Tan^{k}Q \hookrightarrow J^{k}\pi$ (see
\cite{MdLRo}).

\section{The Hamilton-Jacobi problem in the Skinner-Rusk formalism}

Let $Q$ be a $n$-dimensional smooth manifold modeling the
configuration space of a $k$th-order autonomous dynamical system
with $n$ degrees of freedom, and let $\Lag \in \Cinfty(\Tan^kQ)$ be
a Lagrangian function for this system, which is assumed to be
regular. In the Lagrangian-Hamiltonian formalism, we consider the
bundle $\W = \Tan^{2k-1}Q \times_{\Tan^{k-1}Q} \Tan^*(\Tan^{k-1}Q)$
with canonical projections $\pr_1 \colon \W \to \Tan^{2k-1}Q$ and
$\pr_2 \colon \W \to \Tan^*(\Tan^{k-1}Q)$. It is clear from the
definition that the bundle $\W$ fibers over $\Tan^{k-1}Q$. Let $\p
\colon \W \to \Tan^{k-1}Q$ be the canonical projection. Obviously,
we have $\p = \rho^{2k-1}_{k-1} \circ \pr_1 = \pi_{\Tan^{k-1}Q}
\circ \pr_2$. Hence, we have the following commutative diagram
$$
\xymatrix{
\ & \W \ar[dl]_{\pr_1} \ar[dr]^{\pr_2} \ar[dd]_{\p} & \ \\
\Tan^{2k-1}Q \ar[dr]_{\rho^{2k-1}_{r}} & \ & \Tan^*(\Tan^{k-1}Q) \ar[dl]^-{\pi_{\Tan^{k-1}Q}} \\
\ & \Tan^{k-1}Q & \
}
$$

We consider in $\W$ the presymplectic form $\Omega =
\pr_2^*\,\omega_{k-1} \in \df^{2}(\W)$, where
$\omega_{k-1} \in \df^{2}(\Tan^*(\Tan^{k-1}Q))$ is the
canonical symplectic form. In addition, from the Lagrangian function
$\Lag$, and using the canonical coupling function $\C \in
\Cinfty(\W)$, we construct a Hamiltonian function $H \in
\Cinfty(\W)$ as $H = \C - \Lag$. Thus, the dynamical equation for
the system is
\begin{equation}\label{eqn:LagHamDynEq}
\inn(X_{LH})\Omega = \d H \ , \quad X_{LH} \in \vf(\W) \, .
\end{equation}
Following the constraint algorithm in [5], a solution to the
equation \eqref{eqn:LagHamDynEq} exists on the points of a
submanifold $j_o \colon \W_o \hookrightarrow \W$ which can be
identified with the graph of the Legendre-Ostrogradsky map $\Leg
\colon \Tan^{2k-1}Q \to \Tan^*(\Tan^{k-1}Q)$ associated to $\Lag$.
If the Lagrangian function is regular, then there exists a unique
vector field $X_{LH}$ solution to \eqref{eqn:LagHamDynEq} and
tangent to $\W_o$ (see \cite{SKRK}).

\subsection{The generalized Hamilton-Jacobi problem}

We first state the generalized version of the Hamilton-Jacobi
problem. Following the same patterns as in \cite{GHJT}, \cite{CoMdLPrRo} and
\cite{dLMdDVa} (see also an approach to the problem for
higher-order field theories in \cite{Vi-12}), the natural definition for the generalized
Hamilton-Jacobi problem in the Skinner-Rusk setting \cite{PrRR}, \cite{SKRK} is
the following.

\begin{definition}\label{def:GenLagHamHJDef}
The \textnormal{generalized $k$th-order Lagrangian-Hamiltonian
Hamilton-Jacobi problem} (or \textnormal{generalized $k$th-order unified
Hamilton-Jacobi problem}) consists in finding a section $s \in
\Gamma(\p)$ and a vector field $X \in \vf(\Tan^{k-1}Q)$ such that
the following conditions are satisfied:
\begin{enumerate}
\item The submanifold ${\rm Im}(s) \hookrightarrow \W$ is contained in $\W_o$.
\item If $\gamma \colon \R \to \Tan^{k-1}Q$ is an integral curve of $X$,
then $s \circ \gamma \colon \R \to \W$ is an integral curve of
$X_{LH}$, that is,
\begin{equation}\label{eqn:GenLagHamHJDef}
X \circ \gamma = \dot{\gamma} \Longrightarrow
X_{LH} \circ (s \circ \gamma) = \dot{\overline{s \circ \gamma}} \, .
\end{equation}
\end{enumerate}
\end{definition}

It is clear 
that the vector field $X \in \vf(\Tan^{k-1}Q)$ cannot be chosen independently from
the section $s \in \Gamma(\p)$. Indeed, following the same pattern as in \cite{GHJT} we can prove:

\begin{proposition}\label{prop:GenLagHamHJRelatedVF}
The pair $(s,X) \in \Gamma(\p) \times \vf(\Tan^{k-1}Q)$ satisfies
the two conditions in Definition \ref{def:GenLagHamHJDef} if, and
only if, $X_{LH}$ and $X$ are $s$-related.
\end{proposition}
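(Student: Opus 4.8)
The plan is to establish the equivalence by unravelling what it means for $X_{LH}$ and $X$ to be $s$-related and comparing it directly with condition~2 of Definition~\ref{def:GenLagHamHJDef}. Recall that $X_{LH} \in \vf(\W)$ and $X \in \vf(\Tan^{k-1}Q)$ are $s$-related (with $s \in \Gamma(\p)$, so $s \colon \Tan^{k-1}Q \to \W$ is a section of $\p$) precisely when $\Tan s \circ X = X_{LH} \circ s$ as maps $\Tan^{k-1}Q \to \Tan\W$. Condition~1 guarantees that $s$ takes values in $\W_o$, so that $X_{LH}\circ s$ makes sense (using the unique solution $X_{LH}$ tangent to $\W_o$ provided by regularity, as in \cite{SKRK}), and moreover $\p \circ s = \mathrm{Id}$ ensures the projections are compatible. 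First I would prove the "if" direction: assume $X_{LH}$ and $X$ are $s$-related, and let $\gamma \colon \R \to \Tan^{k-1}Q$ be an integral curve of $X$, i.e. $X \circ \gamma = \dot\gamma$. Then $s \circ \gamma$ is a curve in $\W$, and using the chain rule together with the $s$-relatedness,
\[
\dot{\overline{s\circ\gamma}} = \Tan s \circ \dot\gamma = \Tan s \circ X \circ \gamma = X_{LH} \circ s \circ \gamma \, ,
\]
which is exactly \eqref{eqn:GenLagHamHJDef}; condition~1 is immediate since $s$-relatedness with the vector field $X_{LH}$ that is tangent to $\W_o$ forces $\mathrm{Im}(s) \subset \W_o$ (one should check this, but it follows because $s$ maps integral curves to integral curves of $X_{LH}$, which stay in $\W_o$, and these curves cover $\mathrm{Im}(s)$ provided $X$ has integral curves through every point — which it does).

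For the "only if" direction I would argue pointwise. Fix an arbitrary point $y \in \Tan^{k-1}Q$ and choose an integral curve $\gamma$ of $X$ with $\gamma(0) = y$ (such a curve exists locally). By hypothesis $s \circ \gamma$ is an integral curve of $X_{LH}$, so evaluating \eqref{eqn:GenLagHamHJDef} at $t = 0$ gives
\[
X_{LH}(s(y)) = X_{LH}((s\circ\gamma)(0)) = \dot{\overline{s\circ\gamma}}(0) = (\Tan s)(\dot\gamma(0)) = (\Tan s)(X(y)) \, ,
\]
and since $y$ was arbitrary this is precisely the statement that $X_{LH}$ and $X$ are $s$-related. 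Condition~1 is assumed, so the expression $X_{LH}(s(y))$ is well-defined.

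The main subtlety, and the step I would be most careful about, is the interplay with $\W_o$: one must verify that condition~1 is genuinely needed for $X_{LH} \circ s$ to be defined (since $X_{LH}$ as the distinguished solution of \eqref{eqn:LagHamDynEq} lives on $\W_o$), and conversely that $s$-relatedness, which presupposes $X_{LH}\circ s$ is meaningful, already encodes $\mathrm{Im}(s)\subset\W_o$ when $X$ is everywhere defined with integral curves through each point. A second routine-but-necessary point is the use of the section property $\p\circ s = \mathrm{Id}_{\Tan^{k-1}Q}$ to ensure that pushing $X$ forward by $\Tan s$ and comparing with $X_{LH}$ restricted along $\mathrm{Im}(s)$ is compatible with the fibration structure; this is exactly the higher-order analogue of the argument in \cite{GHJT}, and no new geometric input beyond the chain rule and the definition of $s$-relatedness is required.
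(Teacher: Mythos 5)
The paper does not actually write out a proof of this proposition (it only points to the pattern of \cite{GHJT}), so your argument is filling in an omitted proof rather than duplicating one. Your treatment of condition 2 is the standard integral-curve argument and is correct: $s$-relatedness gives $\dot{\overline{s\circ\gamma}} = \Tan s\circ X\circ\gamma = X_{LH}\circ s\circ\gamma$ by the chain rule, and conversely evaluating \eqref{eqn:GenLagHamHJDef} at $t=0$ along an integral curve of $X$ through an arbitrary point $y$ recovers $\Tan s(X(y)) = X_{LH}(s(y))$. This is exactly the route the cited references take.

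The one step that does not hold as written is your derivation of condition 1 from $s$-relatedness. You argue that $s$ maps integral curves of $X$ to integral curves of $X_{LH}$, which ``stay in $\W_o$'', and that these cover ${\rm Im}(s)$. But tangency of $X_{LH}$ to $\W_o$ only guarantees that integral curves \emph{starting} in $\W_o$ remain there; an integral curve through a point of ${\rm Im}(s)$ not lying in $\W_o$ (were such a point to exist) would simply never meet $\W_o$. So if one regards $X_{LH}$ as globally defined on $\W$ by its coordinate expression (which the regularity of $\Lag$ permits), $s$-relatedness alone does not force ${\rm Im}(s)\subseteq\W_o$. The correct resolution is the one you yourself sketch in your closing paragraph: the canonical $X_{LH}$, as a solution of the presymplectic equation \eqref{eqn:LagHamDynEq}, is only defined on $\W_o$, so the assertion ``$X_{LH}$ and $X$ are $s$-related'' already presupposes ${\rm Im}(s)\subseteq\W_o$. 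Condition 1 should therefore be presented as the hypothesis that makes the relatedness statement meaningful, not as a consequence of an invariance argument. With that adjustment made explicit, your proof is complete and matches the intended argument.
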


\begin{corollary}\label{corol:GenLagHamHJRelatedVF}
If $s \in \Gamma(\p)$ and $X \in \vf(\Tan^{k-1}Q)$ satisfy the two
conditions in Definition \ref{def:GenLagHamHJDef}, then $X = \Tan\p \circ X_{LH} \circ s$.
\end{corollary}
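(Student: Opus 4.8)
The plan is to derive the corollary directly from Proposition \ref{prop:GenLagHamHJRelatedVF}, which tells us that under the two conditions of Definition \ref{def:GenLagHamHJDef} the vector fields $X_{LH} \in \vf(\W)$ and $X \in \vf(\Tan^{k-1}Q)$ are $s$-related, where $s \in \Gamma(\p)$ and hence $\p \circ s = {\rm Id}_{\Tan^{k-1}Q}$. Recall that $X_{LH}$ and $X$ being $s$-related means precisely that $\Tan s \circ X = X_{LH} \circ s$, where $\Tan s \colon \Tan(\Tan^{k-1}Q) \to \Tan\W$. The idea is to apply the tangent map $\Tan\p$ to both sides of this identity and use functoriality of the tangent functor together with the fact that $s$ is a section of $\p$.

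First I would compose the $s$-relatedness identity with $\Tan\p$ on the left: $\Tan\p \circ \Tan s \circ X = \Tan\p \circ X_{LH} \circ s$. By functoriality of the tangent functor, $\Tan\p \circ \Tan s = \Tan(\p \circ s) = \Tan({\rm Id}_{\Tan^{k-1}Q}) = {\rm Id}_{\Tan(\Tan^{k-1}Q)}$. Therefore the left-hand side collapses to $X$, and we obtain $X = \Tan\p \circ X_{LH} \circ s$, which is the claimed formula. This also makes manifest that, once the section $s$ is fixed, the vector field $X$ is uniquely determined, confirming the remark preceding Proposition \ref{prop:GenLagHamHJRelatedVF} that $X$ cannot be chosen independently of $s$.

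The argument is essentially a one-line diagram chase, so there is no serious obstacle; the only point requiring a little care is the bookkeeping of the base maps. One must check that $\Tan\p \circ X_{LH} \circ s$ is indeed a well-defined vector field on $\Tan^{k-1}Q$, i.e.\ that it projects correctly under $\tau_{\Tan^{k-1}Q} \colon \Tan(\Tan^{k-1}Q) \to \Tan^{k-1}Q$; this follows because $\tau_{\Tan^{k-1}Q} \circ \Tan\p = \p \circ \tau_\W$ (naturality of the tangent projection) and $X_{LH}$ is a vector field on $\W$ while $s$ is a section of $\p$, so $\tau_{\Tan^{k-1}Q} \circ \Tan\p \circ X_{LH} \circ s = \p \circ \tau_\W \circ X_{LH} \circ s = \p \circ s = {\rm Id}_{\Tan^{k-1}Q}$. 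Hence the right-hand side is a genuine element of $\vf(\Tan^{k-1}Q)$ and the equality of vector fields holds on the nose. Finally, I would note that the restriction of $X_{LH}$ to $\W_o$ (where it is tangent, by regularity of $\Lag$) combined with condition (1), ${\rm Im}(s) \subset \W_o$, guarantees that the composition $X_{LH} \circ s$ makes sense as a map into $\Tan\W_o \subset \Tan\W$, so no transversality issue arises.
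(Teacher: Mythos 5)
Your proof is correct and follows exactly the route the paper intends: the corollary is stated as an immediate consequence of Proposition \ref{prop:GenLagHamHJRelatedVF}, and your argument --- writing $s$-relatedness as $\Tan s \circ X = X_{LH} \circ s$, composing with $\Tan\p$, and using $\Tan(\p \circ s) = \mathrm{Id}$ --- is the standard one-line derivation (as in \cite{GHJT}) that the authors leave implicit. The additional checks on base-point bookkeeping are sound but not needed beyond what you already wrote.
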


That is, the vector field $X \in \vf(\Tan^{k-1}Q)$ is completely
determined by the section $s \in \Gamma(\p)$, and it is called the
\textsl{vector field associated to $s$}.
Therefore, the search of a pair $(s,X) \in \Gamma(\p) \times
\vf(\Tan^{k-1}Q)$ satisfying the two conditions in Definition
\ref{def:GenLagHamHJDef} is equivalent to the search of a section $s
\in \Gamma(\p)$ such that the pair $(s,\Tan\p \circ X_{LH} \circ s)$
satisfies the same condition. Thus, we can restate the problem as follows.

\begin{proposition}\label{def:GenLagHamHJSolution}
The generalized $k$th-order unified Hamilton-Jacobi problem
for $X_{LH}$ is equivalent to finding a section $s \in \Gamma(\p)$ satisfying
the following conditions:
\begin{enumerate}
\item The submanifold ${\rm Im}(s) \hookrightarrow \W$ is contained in $\W_o$.
\item If $\gamma \colon \R \to \Tan^{k-1}Q$
is an integral curve of $\Tan\p \circ X_{LH} \circ s \in
\vf(\Tan^{k-1}Q)$, then $s \circ \gamma \colon \R \to \W$ is an
integral curve of $X_{LH}$, that is
$$
\Tan\p \circ X_{LH} \circ s \circ \gamma = \dot{\gamma} \Longrightarrow
X_{LH} \circ (s \circ \gamma) = \dot{\overline{s \circ \gamma}} \, .
$$
\end{enumerate}
\end{proposition}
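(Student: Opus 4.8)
The plan is to show that the two problems are logically equivalent by chasing through the characterization already established in Proposition~\ref{prop:GenLagHamHJRelatedVF} and Corollary~\ref{corol:GenLagHamHJRelatedVF}. First I would observe that the first condition of Proposition~\ref{def:GenLagHamHJSolution} is literally the same as the first condition of Definition~\ref{def:GenLagHamHJDef}, so nothing needs to be done there; the content lies entirely in matching the second conditions. The key remark is that, by Corollary~\ref{corol:GenLagHamHJRelatedVF}, any pair $(s,X)$ solving the problem of Definition~\ref{def:GenLagHamHJDef} must have $X = \Tan\p \circ X_{LH} \circ s$, so the vector field is not an independent datum but is forced by $s$; conversely, given $s$, the pair $(s, \Tan\p \circ X_{LH} \circ s)$ is the only candidate that could possibly work.

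The forward implication then goes as follows. Suppose $(s,X)$ satisfies the two conditions of Definition~\ref{def:GenLagHamHJDef}. Condition~1 transfers verbatim. By Corollary~\ref{corol:GenLagHamHJRelatedVF}, $X = \Tan\p \circ X_{LH} \circ s$, so condition~2 of Definition~\ref{def:GenLagHamHJDef}, which asserts that integral curves of $X$ lift under $s$ to integral curves of $X_{LH}$, is exactly condition~2 of Proposition~\ref{def:GenLagHamHJSolution} with $X$ replaced by its explicit expression. Hence $s$ solves the restated problem. For the converse, suppose $s \in \Gamma(\p)$ satisfies the two conditions of Proposition~\ref{def:GenLagHamHJSolution}. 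Set $X := \Tan\p \circ X_{LH} \circ s$. Then condition~1 of Definition~\ref{def:GenLagHamHJDef} holds, and condition~2 is precisely the hypothesis on $s$; therefore $(s,X)$ solves the generalized problem of Definition~\ref{def:GenLagHamHJDef}. This establishes the equivalence.

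One subtlety worth making explicit is that in Definition~\ref{def:GenLagHamHJDef} the vector field $X$ a priori ranges over all of $\vf(\Tan^{k-1}Q)$, so one must justify that restricting attention to $X = \Tan\p \circ X_{LH} \circ s$ loses no solutions; this is exactly the content of Corollary~\ref{corol:GenLagHamHJRelatedVF}, and the argument there relies on Proposition~\ref{prop:GenLagHamHJRelatedVF} (i.e.\ on the fact that conditions~1 and~2 together are equivalent to $X_{LH}$ and $X$ being $s$-related, which pins down $X$ via $\Tan\p$). A second point is that $\Tan\p \circ X_{LH} \circ s$ is a well-defined vector field on $\Tan^{k-1}Q$ only because ${\rm Im}(s) \subset \W_o$ and $X_{LH}$ is defined (and tangent to $\W_o$) there, and because $\p \circ s = {\rm Id}_{\Tan^{k-1}Q}$ since $s$ is a section of $\p$; I would note this so that the composition makes sense. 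The main obstacle, such as it is, is purely bookkeeping: keeping straight that the equivalence in Proposition~\ref{prop:GenLagHamHJRelatedVF} has already done the real work, so that the present statement is a formal consequence obtained by substituting the forced expression for $X$ into the definition.
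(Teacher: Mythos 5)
Your proposal is correct and follows essentially the same route as the paper, which presents this proposition as an immediate consequence of Corollary~\ref{corol:GenLagHamHJRelatedVF}: since $X$ is forced to equal $\Tan\p \circ X_{LH} \circ s$, the search for a pair $(s,X)$ reduces to the search for the section $s$ alone. Your additional remarks on well-definedness of the composition are sensible but not part of the paper's (essentially one-line) argument.
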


\begin{proposition}\label{prop:GenLagHamHJEquiv}
The following assertions on a section $s \in \Gamma(\p)$ are
equivalent.
\begin{enumerate}
\item $s$ is a solution to the generalized $k$th-order unified
Hamilton-Jacobi problem.
\item The submanifold ${\rm Im}(s) \hookrightarrow \W$ is invariant under
the flow of the vector field $X_{LH}$ solution to equation
\eqref{eqn:LagHamDynEq} (that is, $X_{LH}$ is tangent to the
submanifold ${\rm Im}(s)$).
\item The section $s$ satisfies the dynamical equation $\inn(X)(s^*\Omega) = \d(s^*H)$,
where $X = \Tan\p \circ X_{LH} \circ s$ is the vector field associated to $s$.
\end{enumerate}
\end{proposition}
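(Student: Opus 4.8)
The plan is to establish the chain of equivalences $(1)\Leftrightarrow(2)\Leftrightarrow(3)$, following the strategy already used in the first-order case in \cite{GHJT} and in the higher-order Lagrangian and Hamiltonian cases in \cite{CoMdLPrRo}, but now transported to the unified Skinner-Rusk bundle $\W$. The first step is to make precise the geometric content of Proposition \ref{def:GenLagHamHJSolution}: a section $s\in\Gamma(\p)$ is a solution precisely when ${\rm Im}(s)\subset\W_o$ and the integral curves of $X=\Tan\p\circ X_{LH}\circ s$ lift through $s$ to integral curves of $X_{LH}$.

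For $(1)\Leftrightarrow(2)$, the key observation is that, by Proposition \ref{prop:GenLagHamHJRelatedVF} and Corollary \ref{corol:GenLagHamHJRelatedVF}, $s$ being a solution is equivalent to $X_{LH}$ and $X=\Tan\p\circ X_{LH}\circ s$ being $s$-related. I would then argue that this $s$-relatedness is in turn equivalent to $X_{LH}$ being tangent to ${\rm Im}(s)$. The forward direction is immediate: if $X_{LH}$ and $X$ are $s$-related, then $X_{LH}\circ s=\Tan s\circ X$, so at every point of ${\rm Im}(s)$ the value of $X_{LH}$ lies in the image of $\Tan s$, hence is tangent to ${\rm Im}(s)$. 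Conversely, since $s$ is a section of $\p$, the map $s$ is an embedding onto ${\rm Im}(s)$ and $\Tan\p\circ\Tan s={\rm Id}$; so if $X_{LH}$ is tangent to ${\rm Im}(s)$, then $X_{LH}\circ s=\Tan s\circ Y$ for some vector field $Y$ on $\Tan^{k-1}Q$, and applying $\Tan\p$ forces $Y=\Tan\p\circ X_{LH}\circ s=X$, giving the $s$-relation. The tangency of $X_{LH}$ to ${\rm Im}(s)$ is exactly the statement that ${\rm Im}(s)$ is invariant under the flow of $X_{LH}$, which is assertion (2). (One should note that ${\rm Im}(s)\subset\W_o$ is needed merely for $X_{LH}$ and the equation \eqref{eqn:LagHamDynEq} to be defined along ${\rm Im}(s)$, and this condition is built into all three assertions either explicitly or implicitly.)

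For $(1)\Leftrightarrow(3)$, I would pull back the dynamical equation \eqref{eqn:LagHamDynEq} along $s$. If $X_{LH}$ and $X$ are $s$-related, then contracting $\inn(X_{LH})\Omega=\d H$ with $\Tan s$ and using naturality of the interior product under $s$-related vector fields gives $\inn(X)(s^*\Omega)=s^*(\inn(X_{LH})\Omega)=s^*\d H=\d(s^*H)$, which is assertion (3). For the converse, one uses regularity: since $\Lag$ is regular, equation \eqref{eqn:LagHamDynEq} has a unique solution $X_{LH}$ tangent to $\W_o$, and the presymplectic structure on ${\rm Im}(s)\cong\Tan^{k-1}Q$ obtained by pulling back $\Omega$ is nondegenerate enough (because $s^*\Omega$ is, in the regular case, the pullback to $\Tan^{k-1}Q$ of a form coming from the canonical symplectic structure, hence the Hamilton-Jacobi equation in (3) determines $X$ uniquely), so that the $X$ solving (3) must coincide with $\Tan\p\circ X_{LH}\circ s$ and be $s$-related to $X_{LH}$; this returns assertion (1) via Proposition \ref{prop:GenLagHamHJRelatedVF}. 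The main obstacle I anticipate is precisely this last point: verifying that $s^*\Omega$ together with $\d(s^*H)$ genuinely pins down the associated vector field, i.e.\ that no nondegeneracy is lost when restricting to ${\rm Im}(s)\subset\W_o$. This hinges on the structure of $\Omega=\pr_2^*\omega_{k-1}$ and on the identification of $\W_o$ with the graph of the Legendre-Ostrogradsky map, so I would handle it by working in adapted coordinates on $\W$ adapted to the splitting $\Tan^{2k-1}Q\times_{\Tan^{k-1}Q}\Tan^*(\Tan^{k-1}Q)$, where $s^*\Omega$ becomes explicit and the argument reduces to the regular Hamiltonian case treated in \cite{CoMdLPrRo}.
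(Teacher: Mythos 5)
Your treatment of $(1)\Leftrightarrow(2)$ is correct and is essentially the intended argument: by Proposition \ref{prop:GenLagHamHJRelatedVF} being a solution amounts to $s$-relatedness of $X_{LH}$ and $X$, and for a section (hence an embedding with $\Tan\p\circ\Tan s={\rm Id}$) $s$-relatedness is exactly tangency of $X_{LH}$ to ${\rm Im}(s)$. The implication $(1)\Rightarrow(3)$ by pulling back $\inn(X_{LH})\Omega=\d H$ through $s$ and using $s^*(\inn(X_{LH})\Omega)=\inn(X)(s^*\Omega)$ for $s$-related fields is also fine. (The paper itself only says the proof is analogous to Proposition 6 and Theorem 2 of \cite{GHJT}, so you are supplying more detail than it does.)

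The gap is in $(3)\Rightarrow(1)$. First, you treat $X$ in assertion (3) as an unknown that the equation $\inn(X)(s^*\Omega)=\d(s^*H)$ must ``pin down''; but in (3) the vector field $X=\Tan\p\circ X_{LH}\circ s$ is already defined in terms of $s$, so there is nothing to determine --- the equation is a condition on $s$, and the task is to show that this condition forces $X_{LH}$ to be tangent to ${\rm Im}(s)$. Second, and decisively, the nondegeneracy you invoke is false: $s^*\Omega=(\pr_2\circ s)^*\omega_{k-1}=-\d(\pr_2\circ s)$ is an exact $2$-form on $\Tan^{k-1}Q$ that is in general highly degenerate, and for solutions of the (non-generalized) problem of Definition \ref{def:LagHamHJDef} one demands precisely $s^*\Omega=0$, so your argument would break down exactly in the case of main interest. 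The correct route, as in Theorem 2 of \cite{GHJT}, is a direct verification (intrinsic, or in the adapted coordinates $(q_i^A,q_j^A,p_A^i)$) that the components of $\inn(X)(s^*\Omega)-\d(s^*H)$ coincide, on ${\rm Im}(s)$, with the ``vertical'' part of the tangency conditions $X_{LH}(q_j^A-s_j^A)=0$, $X_{LH}(p_A^i-\alpha_A^i)=0$, the remaining part being automatic from the definition of $X$; in the unified setting one must additionally use ${\rm Im}(s)\subseteq\W_o$ (i.e.\ that the $\alpha_A^i$ are the Legendre--Ostrogradsky momenta of the $s_j^A$) to recover the full system \eqref{eqn:GenLagHamHJLocal} from the $kn$ equations of assertion (3), a point your sketch does not address.
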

\begin{proof}
The proof is analogous to that of Proposition 6 and Theorem 2 in \cite{GHJT}.
\end{proof}

\paragraph{Coordinate expression.}

Let $(q_0^A)$ be a set of local coordinates in $Q$, with $1
\leqslant A \leqslant n$, and
$(q_0^A,\ldots,q_{2k-1}^A,p_A^{0},\ldots,p_A^{k-1})$ the induced
local coordinates in $\W$ (see \cite{PrRR} for details). Then, local
coordinates in $\W$ adapted to the $\p$-bundle structure are
$(q_i^A,q_j^A,p_A^i)$, where $0 \leqslant i \leqslant k-1$, $k
\leqslant j \leqslant 2k-1$. Hence, a section $s \in \Gamma(\p)$ is
given locally by $s(q_i^A) = (q_i^A,s_j^A,\alpha_A^i)$, where
$s_j^A,\alpha_A^i$ are local functions in $\Tan^{k-1}Q$.

From Proposition \ref{prop:GenLagHamHJEquiv}, an equivalent condition for a section
$s \in \Gamma(\p)$ to be a solution of the generalized $k$th-order unified
Hamilton-Jacobi problem is that the dynamical vector field $X_{LH}$ is tangent
to the submanifold ${\rm Im}(s) \hookrightarrow \W$, which is defined
locally by the constraints $q_j^A - s_j^A = 0$  and $p_A^i - \alpha_A^i = 0$.
From \cite{PrRR}, the vector field $X_{LH}$ solution to equation \eqref{eqn:LagHamDynEq}
is given locally by
$$
X_{LH} = \sum_{l=0}^{2k-2}q_{l+1}^A\derpar{}{q_l^A} + F^A\derpar{}{q_{2k-1}^A}
+ \derpar{\Lag}{q_0^A}\derpar{}{p_A^0} + \left( \derpar{\Lag}{q_i^A} - p_A^{i-1} \right) \derpar{}{p_A^i} \, ,
$$
where $F^A$ are the functions solution to the following system of
$n$ equations
$$
(-1)^k(F^B - d_T(q_{2k-1}^B))\derpars{\Lag}{q_k^B}{q_k^A}
+ \sum_{i=0}^k(-1)^i d_T^i\left(\derpar{\Lag}{q_i^A}\right) = 0 \, .
$$
Hence, requiring $X_{LH}(q_j^A-s_j^A) = 0$ and $X_{LH}(p_A^i - \alpha_A^i) = 0$
we obtain the following system of $2kn$ partial differential equations on ${\rm Im}(s)$
\begin{equation}\label{eqn:GenLagHamHJLocal}
\begin{array}{l}
\displaystyle
s_{j+1}^A - q_{i+1}^B\derpar{s_j^A}{q_i^B} - s_{k}^B\derpar{s_j^A}{q_{k-1}^B} = 0 \ ; \
F^A - q_{i+1}^B\derpar{s_{2k-1}^A}{q_i^B} - s_{k}^B\derpar{s_{2k-1}^A}{q_{k-1}^B} = 0 \\[15pt]
\displaystyle
\derpar{\Lag}{q_A^0} - q_{i+1}^B\derpar{\alpha^0_A}{q_i^B} - s_{k}^B\derpar{\alpha_A^0}{q_{k-1}^B} = 0 \ ; \
\derpar{\Lag}{q_l^A} - \alpha_A^{l-1} - q_{i+1}^B\derpar{\alpha^l_A}{q_i^B} - s_{k}^B\derpar{\alpha^l_A}{q_{k-1}^B} = 0 \, .
\end{array}
\end{equation}
This is a system of $2kn$ partial differential equations with $2kn$
unknown function $s_j^A$, $\alpha_A^i$. Hence, a section $s \in
\Gamma(\p)$ is a solution to the generalized $k$th-order
Lagrangian-Hamiltonian Hamilton-Jacobi problem if, and only if, its
component functions satisfy the local equations
\eqref{eqn:GenLagHamHJLocal}.

\subsection{The Hamilton-Jacobi problem}
\label{sec:LagHamHJProblem}

In general, 
to solve the generalized $k$th-order Hamilton-Jacobi problem is a difficult task
since we must find $kn$-dimensional submanifolds of $\W$ contained
in the submanifold $\W_o$ and invariant by the dynamical vector field $X_{LH}$.
Hence, it is convenient to consider a less general problem and require some
additional conditions to the section $s \in \Gamma(\p)$   \cite{AM,GHJT}.

\begin{definition}\label{def:LagHamHJDef}
The \textnormal{$k$th-order Lagrangian-Hamiltonian Hamilton-Jacobi problem}
consists in finding sections $s \in \Gamma(\p)$ solution to the
generalized $k$th-order unified Hamilton-Jacobi problem
such that $s^*\Omega = 0$. Such a section is called a
\textnormal{solution to the $k$th-order Lagrangian-Hamiltonian Hamilton-Jacobi problem}.
\end{definition}

From the definition of $\Omega \in \df^{2}(\W)$ we have
$$
s^*\Omega = s^*(\pr_2^*\omega_{k-1}) = (\pr_2 \circ s)^*\omega_{k-1}\ .
$$
Hence, $s^*\Omega = 0$ if, and only if, $(\pr_2 \circ s)^*\omega_{k-1} = 0$.
As $\Gamma(\pi_{\Tan^{k-1}Q}) = \df^{1}(\Tan^{k-1}Q)$, the section
$\pr_2 \circ s \in \Gamma(\pi_{\Tan^{k-1}Q})$ is a $1$-form in $\Tan^{k-1}Q$,
and from the properties of the tautological form $\theta_{k-1}$
of the cotangent bundle $\Tan^*(\Tan^{k-1}Q)$ we have
$$
(\pr_2 \circ s)^*\omega_{k-1} = (\pr_2 \circ s)^*(-\d\theta_{k-1}) =
-\d((\pr_2 \circ s)^*\theta_{k-1}) = -\d(\pr_2 \circ s) \, .
$$
Hence, the condition $s^*\Omega=0$ is equivalent to
$\pr_2 \circ s \in \df^{1}(\Tan^{k-1}Q)$ being a closed $1$-form.
Therefore, the Hamilton-Jacobi problem can be reformulated as follows.

\begin{proposition}\label{def:LagHamHJDefClosedForm}
The $k$th-order unified Hamilton-Jacobi problem
is equivalent to finding sections $s \in \Gamma(\p)$ solution to the
generalized $k$th-order unified Hamilton-Jacobi problem
such that $\pr_2 \circ s$ is a closed $1$-form in $\Tan^{k-1}Q$.
\end{proposition}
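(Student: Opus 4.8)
The plan is to obtain the equivalence directly from Definition \ref{def:LagHamHJDef} by rewriting the single extra condition $s^*\Omega = 0$ in the asserted form. Definition \ref{def:LagHamHJDef} characterizes a solution of the $k$th-order unified Hamilton-Jacobi problem as a section $s \in \Gamma(\p)$ which solves the generalized problem \emph{and} satisfies $s^*\Omega = 0$; hence it suffices to show that, for such a section, $s^*\Omega = 0$ holds if and only if $\pr_2 \circ s$ is a closed $1$-form on $\Tan^{k-1}Q$.

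First I would use $\Omega = \pr_2^*\omega_{k-1}$ to get $s^*\Omega = (\pr_2 \circ s)^*\omega_{k-1}$, and note that $\pr_2 \circ s$ is a section of $\pi_{\Tan^{k-1}Q}$, hence a $1$-form on $\Tan^{k-1}Q$. The key step is the reproducing property of the tautological $1$-form $\theta_{k-1}$ of $\Tan^*(\Tan^{k-1}Q)$: for any $\alpha \in \df^1(\Tan^{k-1}Q)$, viewed as a section $\alpha \colon \Tan^{k-1}Q \to \Tan^*(\Tan^{k-1}Q)$, one has $\alpha^*\theta_{k-1} = \alpha$. Applying this to $\alpha = \pr_2 \circ s$, together with $\omega_{k-1} = -\d\theta_{k-1}$ and the commutation of pullback with exterior derivative, yields
$$
s^*\Omega = (\pr_2 \circ s)^*(-\d\theta_{k-1}) = -\d\bigl((\pr_2 \circ s)^*\theta_{k-1}\bigr) = -\d(\pr_2 \circ s) \, .
$$
Thus $s^*\Omega = 0$ if and only if $\pr_2 \circ s$ is closed, and substituting this condition for $s^*\Omega = 0$ in Definition \ref{def:LagHamHJDef} gives the claim.

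There is essentially no serious obstacle here: the statement is a reformulation, and the only point requiring care is the reproducing property of the tautological form, which is a standard fact about cotangent bundles — it can be verified in canonical coordinates, where $\theta_{k-1}$ has the form $p_i\,\d q^i$ and a $1$-form section is $p_i = \alpha_i(q)$. All the nontrivial content, namely that $s$ solves the generalized problem, is inherited unchanged, so the argument reduces to the display above, which is precisely the computation already performed in the text preceding the statement.
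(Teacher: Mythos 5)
Your argument is correct and coincides with the paper's own: the text preceding the proposition carries out exactly the same computation, using $s^*\Omega = (\pr_2 \circ s)^*\omega_{k-1}$, the reproducing property $(\pr_2 \circ s)^*\theta_{k-1} = \pr_2 \circ s$ of the tautological form, and $\omega_{k-1} = -\d\theta_{k-1}$ to conclude $s^*\Omega = -\d(\pr_2 \circ s)$. Nothing is missing; the statement is indeed just a reformulation of Definition \ref{def:LagHamHJDef}.
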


Taking into account the new assumption $s^*\Omega = 0$ in Definition \ref{def:LagHamHJDef},
a consequence of Proposition \ref{prop:GenLagHamHJEquiv} is the following result.

\begin{proposition}\label{prop:LagHamHJEquiv}
The following assertions on a section $s \in \Gamma(\p)$
satisfying $s^*\Omega = 0$ are equivalent:
\begin{enumerate}
\item $s$ is a solution to the $k$th-order unified Hamilton-Jacobi problem.
\item $\d(s^*H) = 0$.
\item ${\rm Im}(s)$ is an isotropic submanifold of $\W$
invariant by $X_{LH}$.
\item The integral curves of $X_{LH}$ with initial conditions in ${\rm Im}(s)$
project onto the integral curves of $X = \Tan\p \circ X_{LH} \circ s$.
\end{enumerate}
\end{proposition}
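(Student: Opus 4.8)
The overall strategy is to use the standing hypothesis $s^*\Omega=0$ to reduce each of the four assertions to the corresponding item of Proposition \ref{prop:GenLagHamHJEquiv}, organizing the argument as a chain of equivalences with assertion (1) — being a solution to the $k$th-order unified Hamilton-Jacobi problem — as the hub. For $(1)\Leftrightarrow(3)$: by Definition \ref{def:LagHamHJDef}, under $s^*\Omega=0$ the section $s$ solves the $k$th-order problem exactly when it solves the generalized one, which by Proposition \ref{prop:GenLagHamHJEquiv} is equivalent to $X_{LH}$ being tangent to ${\rm Im}(s)$; and since a section of the submersion $\p$ is an embedding that intertwines $\Omega$ with $s^*\Omega$, the hypothesis $s^*\Omega=0$ is precisely the statement that ${\rm Im}(s)$ is isotropic in $(\W,\Omega)$, so conjoining the two gives (3). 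For $(1)\Leftrightarrow(2)$: item 3 of Proposition \ref{prop:GenLagHamHJEquiv} characterizes the generalized solutions by $\inn(X)(s^*\Omega)=\d(s^*H)$ with $X=\Tan\p\circ X_{LH}\circ s$; under $s^*\Omega=0$ the left-hand side vanishes, so this collapses to $\d(s^*H)=0$.

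For $(1)\Leftrightarrow(4)$: assuming (1), Proposition \ref{prop:GenLagHamHJRelatedVF} gives that $X_{LH}$ and $X$ are $s$-related, and by the step above $X_{LH}$ is tangent to ${\rm Im}(s)$; hence any integral curve $\sigma$ of $X_{LH}$ with $\sigma(0)\in{\rm Im}(s)$ stays in ${\rm Im}(s)$, so $\sigma=s\circ(\p\circ\sigma)$ and $s$-relatedness forces $\p\circ\sigma$ to be an integral curve of $X$, which is (4). Conversely, I would read (4) as saying that the integral curves of $X_{LH}$ issuing from ${\rm Im}(s)$ are exactly the curves $s\circ\gamma$ with $\gamma$ an integral curve of $X$; together with ${\rm Im}(s)\subset\W_o$ this is condition 2 of Proposition \ref{def:GenLagHamHJSolution}, hence (1).

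The delicate point — and the main obstacle I anticipate — is this last converse $(4)\Rightarrow(1)$: one must make sure that ``projecting onto the integral curves of $X$'' genuinely yields the implication ``$\gamma$ integral curve of $X$ $\Rightarrow$ $s\circ\gamma$ integral curve of $X_{LH}$'' of Definition \ref{def:GenLagHamHJDef}, and not merely a weaker curve-by-curve assertion; this is exactly where the $s$-relatedness characterization of Proposition \ref{prop:GenLagHamHJRelatedVF} and uniqueness of integral curves do the real work, following the lines of \cite{GHJT} and \cite{CoMdLPrRo}. Once that is settled, the remaining content of the proposition is only the bookkeeping of the three equivalences above.
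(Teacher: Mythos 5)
Your proposal is correct and takes essentially the same route as the paper: the paper gives no detailed proof, simply declaring the proposition ``a consequence of Proposition \ref{prop:GenLagHamHJEquiv}'' once the standing hypothesis $s^*\Omega = 0$ is taken into account, which is exactly the item-by-item reduction you carry out (item 3 of that proposition collapsing to $\d(s^*H)=0$, invariance plus isotropy giving (3), and $s$-relatedness plus uniqueness of integral curves giving (4)). The one place where you supply more than the paper — the converse $(4)\Rightarrow(1)$ — is treated in the same implicit way in the cited reference \cite{GHJT}, so there is no discrepancy of approach.
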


\paragraph{Coordinate expression.}
From \cite{PrRR}, the Hamiltonian function
in $\W$ has coordinate expression
$H = q_{i+1}^Ap_A^i - \Lag(q_0^A,\ldots,q_k^A)$. Thus, its differential is given locally by
$$
\d H = -\derpar{\Lag}{q_0^A}\d q_0^A + \left( p_A^i - \derpar{\Lag}{q_{i+1}^A} \right)\d q_{i+1}^A
+ q_{i+1}^A\d p_A^i \, .
$$
Hence, the condition $\d(s^*H) = 0$ in Proposition \ref{prop:LagHamHJEquiv}
holds if, and only if, the following $kn$ partial differential equations
are satisfied
\begin{equation}\label{eqn:LagHamHJLocal}
\begin{array}{l}
\displaystyle q_{i+1}^B\derpar{\alpha^i_B}{q_0^A} + s_{k}^B\derpar{\alpha_B^{k-1}}{q_0^A} +
\alpha_B^{k-1}\derpar{s_{k}^B}{q_0^A} - \left( \derpar{\Lag}{q_0^A} + \derpar{\Lag}{q_{k}^B}\derpar{s_{k}^B}{q_0^A} \right) = 0 \, ,\\[15pt]
\displaystyle q_{i+1}^B\derpar{\alpha^i_B}{q_l^A} + s_{k}^B\derpar{\alpha_B^{k-1}}{q_l^A} + \alpha_A^{l-1} +
\alpha_B^{k-1}\derpar{s_{k}^B}{q_l^A} - \left( \derpar{\Lag}{q_l^A} + \derpar{\Lag}{q_{k}^B}\derpar{s_{k}^B}{q_l^A} \right) = 0 \, ,
\end{array}
\end{equation}
where $1 \leqslant l \leqslant k-1$.

Equivalently, we can require the $1$-form $\pr \circ s \in \df^{1}(\Tan^{k-1}Q)$
to be closed, that is, $\d(\pr \circ s) = 0$. Locally, this condition reads
\begin{equation}\label{eqn:LagHamHJLocalClosedForm}
\derpar{\alpha_A^i}{q_j^B} - \derpar{\alpha_B^j}{q_i^A} = 0 \ ,
\, \mbox{with } A \neq B \mbox{ or } i \neq j \, .
\end{equation}

Therefore, a section $s \in \Gamma(\p)$
is a solution to the $k$th-order Lagrangian-Hamiltonian Hamilton-Jacobi problem
if, and only if, the local functions $s_j^A,\alpha_A^i$ satisfy the system
of partial differential equations given by  \eqref{eqn:GenLagHamHJLocal}
and \eqref{eqn:LagHamHJLocal}, or, equivalently \eqref{eqn:GenLagHamHJLocal}
and \eqref{eqn:LagHamHJLocalClosedForm}. Observe that the system of partial differential equations
may not be $\Cinfty(U)$-linearly independent.

\subsection{Relation with the Lagrangian and Hamiltonian formalisms}

Finally, we state the relation between the solutions of the
Hamilton-Jacobi problem in the unified formalism and the solutions
of the problem in the Lagrangian and Hamiltonian settings given in
\cite{CoMdLPrRo}.

\begin{theorem}
Let $\Lag \in \Cinfty(\Tan^{k}Q)$ be a hyperregular Lagrangian
function.
\begin{enumerate}
\item If $s \in \Gamma(\p)$ is a solution to the (generalized)
$k$th-order Lagrangian-Hamiltonian Hamilton-Jacobi problem, then
the sections $s_\Lag = \pr_1 \circ s \in
\Gamma(\rho^{2k-1}_{k-1})$ and $\alpha = \pr_2 \circ s \in
\df^{1}(\Tan^{k-1}Q)$ are solutions to the (generalized)
$k$th-order Lagrangian and Hamiltonian Hamilton-Jacobi problems,
respectively.

\item If $s_\Lag \in \Gamma(\rho^{2k-1}_{k-1})$ is a solution to the
(generalized) $k$th-order Lagrangian
Hamilton-Jacobi problem, then 
$s = j_o \circ \overline{\pr}_1^{-1} \circ s_\Lag \in \Gamma(\p)$ is
a solution to the (generalized) $k$th-order
Lagrangian-Hamiltonian Hamilton-Jacobi problem.

\noindent If $\alpha \in \df^{1}(\Tan^{k-1}Q)$ is a solution to the
(generalized) $k$th-order Hamiltonian Hamilton-Jacobi problem, then
$s =j_o \circ \overline{\pr}_2^{-1} \circ \alpha \in \Gamma(\p)$ is
a solution to the (generalized) $k$th-order Lagrangian-Hamiltonian Hamilton-Jacobi problem.
\end{enumerate}
\end{theorem}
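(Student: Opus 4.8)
The strategy is to use the commutative relations among the canonical projections $\pr_1$, $\pr_2$, $\p$ and the projections of the Lagrangian and Hamiltonian formalisms, together with the fact (recalled after equation \eqref{eqn:LagHamDynEq}) that $\W_o$ is the graph of the Legendre-Ostrogradsky map $\Leg$, and that on $\W_o$ the dynamical vector field $X_{LH}$ is $\pr_1$-projectable and $\pr_2$-projectable onto the Euler-Lagrange vector field $X_\Lag$ and the Hamiltonian vector field $X_h$ respectively. For item 1, I would first observe that since $\mathrm{Im}(s) \subset \W_o$, the restrictions $\overline{\pr}_1 \colon \W_o \to \Tan^{2k-1}Q$ and $\overline{\pr}_2 \colon \W_o \to \Tan^*(\Tan^{k-1}Q)$ are diffeomorphisms, so $s_\Lag = \pr_1 \circ s$ and $\alpha = \pr_2 \circ s$ are genuine sections of $\rho^{2k-1}_{k-1}$ and $\pi_{\Tan^{k-1}Q}$ (the latter being a $1$-form on $\Tan^{k-1}Q$). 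The condition "$\mathrm{Im}(s)\subset\W_o$" translates exactly into the analogous first condition for the projected objects (for the Lagrangian side this is automatic; for the Hamiltonian side it says $\alpha$ takes values in $\mathrm{Im}(\Leg)$, which is the whole cotangent bundle in the hyperregular case, so again automatic). Then I would invoke Proposition \ref{prop:GenLagHamHJEquiv}(2): $s$ is a solution iff $X_{LH}$ is tangent to $\mathrm{Im}(s)$. Applying $\Tan\pr_1$ and $\Tan\pr_2$ and using that these maps send $X_{LH}|_{\W_o}$ to $X_\Lag$ and $X_h$ respectively, tangency of $X_{LH}$ to $\mathrm{Im}(s)$ forces tangency of $X_\Lag$ to $\mathrm{Im}(s_\Lag) = \pr_1(\mathrm{Im}(s))$ and of $X_h$ to $\mathrm{Im}(\alpha) = \pr_2(\mathrm{Im}(s))$; these are precisely the characterizations of solutions to the Lagrangian and Hamiltonian generalized Hamilton-Jacobi problems established in \cite{CoMdLPrRo}. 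For the non-generalized case one additionally needs $s^*\Omega = 0$ to descend: by the computation preceding Proposition \ref{def:LagHamHJDefClosedForm}, $s^*\Omega = -\d\alpha$, so $s^*\Omega = 0$ is equivalent to $\alpha$ being closed, which is exactly the extra condition in the Hamiltonian Hamilton-Jacobi problem; and the corresponding Lagrangian condition is obtained by pulling back $\Leg^*\omega_{k-1}$, i.e. the Poincar\'e-Cartan form, through $s_\Lag$.

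For item 2, the argument runs in reverse. Given a solution $s_\Lag$ of the Lagrangian problem, the map $j_o \circ \overline{\pr}_1^{-1}$ is well-defined because $\overline{\pr}_1 \colon \W_o \to \Tan^{2k-1}Q$ is a diffeomorphism (hyperregularity), and $\overline{\pr}_1^{-1}$ is a section of $\overline{\pr}_1$; composing with $s_\Lag$, which is a section of $\rho^{2k-1}_{k-1} = \p|_{\W_o} \circ \overline{\pr}_1^{-1}{}^{-1}\cdots$ — more carefully, one checks $\p \circ j_o \circ \overline{\pr}_1^{-1} \circ s_\Lag = \rho^{2k-1}_{k-1} \circ s_\Lag = \mathrm{Id}$ using the commutative diagram, so $s$ is indeed a section of $\p$, and $\mathrm{Im}(s) \subset \W_o$ by construction. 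Then one checks that $\mathrm{Im}(s) = (\overline{\pr}_1^{-1})(\mathrm{Im}(s_\Lag))$, and since $\overline{\pr}_1$ conjugates $X_{LH}|_{\W_o}$ with $X_\Lag$, tangency of $X_\Lag$ to $\mathrm{Im}(s_\Lag)$ gives tangency of $X_{LH}$ to $\mathrm{Im}(s)$, hence $s$ solves the generalized unified problem by Proposition \ref{prop:GenLagHamHJEquiv}. For the non-generalized case, the hypothesis on $s_\Lag$ is the vanishing of the pullback of the Poincar\'e-Cartan $2$-form; since $\Omega = \pr_2^*\omega_{k-1}$ and on $\W_o$ we have $\pr_2|_{\W_o} = \Leg \circ \pr_1|_{\W_o}$, the form $j_o^*\Omega$ equals $\pr_1|_{\W_o}^*(\Leg^*\omega_{k-1})$, so $s^*\Omega$ is the pullback by $s_\Lag$ of the Poincar\'e-Cartan form, which vanishes by hypothesis. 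The argument for $\alpha$ is entirely analogous, using $\overline{\pr}_2$ in place of $\overline{\pr}_1$ and the Hamiltonian vector field in place of the Euler-Lagrange one; here $s^*\Omega = -\d\alpha = 0$ is immediate from $\alpha$ being closed.

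**Main obstacle.** The routine but delicate part is bookkeeping the identifications: verifying that $s = j_o \circ \overline{\pr}_1^{-1} \circ s_\Lag$ (resp. with $\overline{\pr}_2$) is genuinely a section of $\p$ and that its image coincides with the image of the corresponding Lagrangian (resp. Hamiltonian) solution under the appropriate inverse diffeomorphism — all of this relies on chasing the commutative diagram relating $\W$, $\Tan^{2k-1}Q$, $\Tan^*(\Tan^{k-1}Q)$ and $\Tan^{k-1}Q$ and on the precise statement that $\W_o = \mathrm{graph}(\Leg)$. The conceptual content (tangency is preserved by projection along maps that intertwine the vector fields, and the $s^*\Omega = 0$ condition descends to the Poincar\'e-Cartan/closedness conditions) is straightforward once these identifications are in place. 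I would therefore organize the proof around two lemmas: one establishing that $\overline{\pr}_1$ and $\overline{\pr}_2$ are diffeomorphisms of $\W_o$ onto the respective bundles intertwining $X_{LH}|_{\W_o}$ with $X_\Lag$ and $X_h$ (this is essentially in \cite{SKRK,PrRR}), and one relating the $2$-forms; the theorem then follows formally from Proposition \ref{prop:GenLagHamHJEquiv} and Proposition \ref{prop:LagHamHJEquiv} applied on both sides.
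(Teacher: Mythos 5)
Your proposal is correct and follows essentially the same route as the paper's own (much terser) proof: item 1 by projecting the tangency/invariance characterization of Proposition \ref{prop:GenLagHamHJEquiv} through $\pr_1$ and $\pr_2$, and item 2 by using that $\overline{\pr}_1$ and $\overline{\pr}_2$ restrict to diffeomorphisms on $\W_o$, to which $X_{LH}$ is tangent, together with the descent of $s^*\Omega=0$ to the closedness of $\alpha$ and the Poincar\'e--Cartan condition on $s_\Lag$. The paper merely sketches these same points (deferring item 1 to Theorem 1 of \cite{CoMdLPrRo}), so your write-up is a faithful, more detailed version of its argument.
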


\begin{proof}
The proof of the first item follows the same patterns that the proof of Theorem
1 in \cite{CoMdLPrRo}.
For the second item, the key point is to take into account
that the maps $\overline{\pr}_1 \colon \W \to \Tan^{2k-1}Q$
and $\overline{\pr}_2 \colon \W \to \Tan^*(\Tan^{k-1}Q)$ are
diffeomorphisms, and that the dynamical vector field $X_{LH} \in \vf(\W)$
solution to equation \eqref{eqn:LagHamDynEq} is tangent to $\W_o$, and
therefore is $j_o$-related to a vector field $X_o \in \vf(\W_o)$
for which it is possible to state an equivalent Hamilton-Jacobi
problem.
\end{proof}

\subsection{An example: A (homogeneous) deformed elastic cylindrical beam with fixed ends}

Consider a deformed elastic cylindrical beam with both ends fixed
(see \cite{PrRR} and references therein). The problem is to determinate its shape; that is,
the width of every section transversal to the axis. This gives rise
to a $1$-dimensional second-order dynamical system, which is
autonomous if we require the beam to be homogeneous. Let $Q$ be the
$1$-dimensional smooth manifold modeling the configuration space of
the system with local coordinate $(q_0)$. Then, in the natural
coordinates of $\Tan^2Q$, the Lagrangian function for this system is
$$
\Lag(q_0,q_1,q_2) = \frac{1}{2}\mu q_2^2 + \rho q_0 \, ,
$$
where $\mu,\rho \in \R$ are constants, and $\mu\neq0$. This a
regular Lagrangian function because the Hessian matrix
$$
\ds \left( \derpars{\Lag}{q_2}{q_2} \right) = \mu \, ,
$$
has maximum rank equal to $1$ when $\mu \neq 0$.

In the induced natural coordinates $(q_0,q_1,q_2,q_3,p^0,p^1)$ of
$\W,$ the coordinate expressions of the presymplectic form $\Omega =
\pr_2^*\omega_1 \in \df^{2}(\W)$ and the Hamiltonian function $H =
\C - \Lag \in \Cinfty(\W)$ are
$$
\Omega = \d q_0 \wedge \d p^0 + \d q_1 \wedge \d p^1 \quad ; \quad
H = q_1p^0 + q_2p^1 - \frac{1}{2}\mu q_2^2 - \rho q_0 \, .
$$
Thus, the semispray of type $1$ $X_{LH} \in \vf(\W)$ solution to the
dynamical equation \eqref{eqn:LagHamDynEq} and tangent to the
submanifold $\W_o = {\rm graph}(\Leg) \hookrightarrow \W$ has the
following coordinate expression
$$
X_{LH} = q_1\derpar{}{q_0} + q_2\derpar{}{q_1} + q_3\derpar{}{q_2}
- \frac{\rho}{\mu}\derpar{}{q_3} + \rho\derpar{}{p^0} - p^0 \derpar{}{p^1} \, .
$$

In the following we state the equations for the (generalized)
Lagrangian-Hamilonian Hamilton-Jacobi problem for this dynamical
system.

In the generalized Lagrangian-Hamiltonian Hamilton-Jacobi problem we
look for sections $s \in \Gamma(\p)$, given locally by
$s(q_0,q_1) = (q_0,q_1,s_2,s_3,\alpha^0,\alpha^1)$, such that the
submanifold ${\rm Im}(s) \hookrightarrow \W$ is invariant under the
flow of $X_{LH} \in \vf(\W)$. Since the constraints defining locally
${\rm Im}(s)$ are $q_2-s_2=0$, $q_3-s_3=0$, $p^0-\alpha^0=0$,
$p^1-\alpha^1=0$, then the equations for the section $s$ are
$$
\begin{array}{l}
\displaystyle s_3 - q_1\derpar{s_2}{q_0} - s_2\derpar{s_2}{q_1} = 0 \ ; \
-\frac{\rho}{\mu} - q_1\derpar{s_3}{q_0} - s_2\derpar{s_3}{q_1} = 0 \, , \\[10pt]
\displaystyle \rho - q_1\derpar{\alpha^0}{q_0} - s_2\derpar{\alpha^0}{q_1} = 0 \ ; \
-\alpha^0 - q_1\derpar{\alpha^1}{q_0} - s_2\derpar{\alpha^1}{q_1} = 0 \, .
\end{array}
$$

For the Lagrangian-Hamiltonian Hamilton-Jacobi problem, we
require in addition the section $s \in \Gamma(\rho^\W_1)$ to satisfy
$s^*\Omega = 0$ or, equivalently, the form $\pr_2 \circ s \in
\df^{1}(\Tan Q)$ to be closed. In coordinates, if $s =
(q_0,q_1,s_2,s_3,\alpha^0,\alpha^1)$, then the $1$-form $\pr_2 \circ
s$ is given by $\pr_2\circ s = \alpha^0 \d q_0 + \alpha^1 \d q_1$.
Hence, a section $s \in \Gamma(\p)$ solution to the
unified Hamilton-Jacobi problem for this
system must satisfy the following system of $5$ partial differential
equations
$$
\begin{array}{c}
\displaystyle s_3 - q_1\derpar{s_2}{q_0} - s_2\derpar{s_2}{q_1} = 0 \ ; \
-\frac{\rho}{\mu} - q_1\derpar{s_3}{q_0} - s_2\derpar{s_3}{q_1} = 0 \ ; \
\displaystyle \derpar{\alpha^1}{q_0} - \derpar{\alpha^0}{q_1} = 0 \, , \\[10pt]
\displaystyle \rho - q_1\derpar{\alpha^0}{q_0} - s_2\derpar{\alpha^0}{q_1} = 0 \ ; \
-\alpha^0 - q_1\derpar{\alpha^1}{q_0} - s_2\derpar{\alpha^1}{q_1} = 0 \, .
\end{array}
$$
Observe that, when the condition ${\rm Im}(s) \subseteq \W_o = {\rm graph}\Leg$
is required, these equations project to the Lagrangian or Hamiltonian equations for
the Hamilton-Jacobi problem \cite{CoMdLPrRo}.

\section*{Acknowledgments}

We acknowledge the financial support of the \textsl{MICINN}
(Spain), projects MTM2010-21186-C02-01,
MTM2011-22585 and  MTM2011-15725-E; AGAUR, project 2009 SGR:1338.;
IRSES-project ``Geomech-246981''; and ICMAT Severo Ochoa project
SEV-2011-0087. P.D. Prieto-Mart\'{\i}nez wants to thank the UPC for
a Ph.D grant, and L. Colombo wants to thank CSIC for a JAE-Pre grant.

\end{document}